\title{Improving Gebauer's construction of 3-chromatic hypergraphs with few edges}
\author{Jakub Kozik}
\address{Theoretical Computer Science Department, Faculty of Mathematics and Computer Science, Jagiellonian University, Krak\'{o}w, Poland}
\email{Jakub.Kozik@uj.edu.pl}
\keywords{Property B, Hypergraph Coloring, Deterministic Constructions}
\thanks{This work was partially supported by Polish National Science Center (2016/21/B/ST6/02165)}
\newenvironment{algorithm-hbox}{\hbadness=10000\begin{algorithm}}{\end{algorithm}}
\theoremstyle{plain}
\newtheorem{theorem}{Theorem}
\newtheorem{lemma}[theorem]{Lemma}
\newtheorem{corollary}[theorem]{Corollary}
\newtheorem{proposition}[theorem]{Proposition}
\numberwithin{equation}{section}
\renewcommand{\epsilon}{\varepsilon}
\newcommand{\epsi}{\varepsilon}
\renewcommand{\leq}{\leqslant}
\renewcommand{\geq}{\geqslant}
\newcommand{\expt}{\mathbb{E}}
\newcommand{\prob}{\mathrm{Pr}}
\newcommand{\eul}{\mathrm{e}}
\newcommand\vol{\mathit{vol}}
\newcommand\hs{\mathit{HS}}
\begin{document}

\begin{abstract}
    In 1964 Erd\H{o}s proved, by randomized construction, that the minimum number of edges in a $k$-graph that is not two colorable is $O(k^2\; 2^k)$.
To this day, it is not known whether there exist such $k$-graphs with smaller number of edges.
Known deterministic constructions use much larger number of edges.
The most recent one by Gebauer requires $2^{k+\Theta(k^{2/3})}$ edges.
Applying derandomization technique we reduce that number to $2^{k+\widetilde{\Theta}(k^{1/2})}$.
\end{abstract}

\maketitle

\newcommand{\dt}{d_\tau}


\newcommand{\ca}{c_\mathcal{A}}
\newcommand{\cd}{c_\mathcal{D}}

\newcommand{\cp}{c_p}

\newcommand{\sg}{\sigma}

\section{Introduction}

In 1964 Erd\H{o}s proved in \cite{Erd1964} that $(1+o(1)) \frac{\eul \ln(2)}{4} k^2 2^{k}$ edges are sufficient to build a $k$-graph\footnote{i.e. $k$-uniform hypergraph} which is not two colorable.
To this day that result provides the best known upper bound for the minimum number of edges in such hypergraph.
The Erd\H{o}s' bound results from the fact that random $k$-graph with that number of edges, built on a set of $k^2/2$ vertices 
can not be colored properly with two colors with high probability.


The best known deterministic construction of a $k$-graph that is not two colorable has been obtained by Gebauer \cite{Gebauer13}.
It requires $2^{k+\Theta(k^{2/3})}$ edges.
It is also the first construction in which the number of edges is $2^{k+o(k)}$.
The main result of the current paper is an upgrade of this construction that allows to cut down the number of edges to $2^{k+ \Theta((k\log(k))^{1/2})}$.

\bigskip

Within the whole paper, $\log(.)$ stands for binary logarithm.
We are only concerned with vertex two coloring of hypergraphs.
Vertex coloring is \emph{proper} if no edge is monochromatic.
Following common convention we use colors \emph{red} and \emph{blue}.

\section{Gebauer's construction}

We start with recalling the construction of \cite{Gebauer13}, as we are going to modify it.
The whole procedure is parametrized by $t=t(k)$ that takes value roughly $k^{\alpha}$ for some optimized positive $\alpha <1$.
It it convenient to organize the vertices of the constructed hypergraph into a rectangular matrix $\mathbb{M}$.
Slightly abusing the notation, we use $\mathbb{M}$ for both the matrix and the set of vertices.
We use the same convention for submatrices of $\mathbb{M}$.
The length of the rows is denoted by $s$. 
Its value will be a subject of optimization.

\subsection{Preliminary choice of rows}
Vertex coloring can be seen as assigning colors to the entries of the matrix.
A color is \emph{dominating} in a row if at least half of its entries are colored with it 
(there can be two dominating colors).
The main part of the construction is designed to work with a submatrix of $t$ rows with the same dominating color.
A matrix for which one of the colors is dominating in all rows will be called \emph{consistently dominated}.
We always assume that red is the dominating color in such a matrix. 

The ground matrix $\mathbb{M}$ has $2t-1$ rows.
Hence, the hypergraph is built on $(2t-1)\cdot s$ vertices.
Let $\mathcal{M}$ denote the set of submatrices of $\mathbb{M}$ built of every $t$ rows.
For every $M\in \mathcal{M}$ we apply the \emph{main construction} described in the next section.
The construction outputs hypergraph $H_M$.
The union of the edge sets of these hypergraphs forms the edge set of the resulting hypergraph.
For every coloring of $\mathbb{M}$ at least one submatrix $M \in \mathcal{M}$ is consistently dominated.
The main construction guarantees that in such a case, $H_M$ contains a monochromatic edge.

\subsection{Main construction}
Let $M\in \mathcal{M}$, recall that $M$ has $t$ rows.
Our goal is to build a hypergraph $H_M$ on the vertex set $M$ such that for every consistently dominated coloring of $M$, there exists a monochromatic edge in $H_M$.
For $(\sg_1, \ldots, \sg_t) \in [s]^t$, we denote by $M(\sg_1, \ldots, \sg_t)$ matrix $M$ in which for every $i\in [t]$, 
the $i$-th row has been cyclically shifted by $\sg_i$.
The construction proceeds as follows.
\bigskip
\noindent \\
\texttt{
For every
\begin{enumerate}
    \item \emph{sequence of shifts} $\sg \in [s]^t$,
    \item and \emph{set of indices} $I \subset [s]$ of size $k/t$,
\end{enumerate}
add to $H_M$ an edge built from all elements of the columns of $M(\sg)$ with indices in $I$.
}
\bigskip

Note that the edges of $H_M$ are of size $k$ as required.

Let us fix a consistently dominated coloring of $M$.
We assume wlog that red is the dominating color of the rows.
When the sequence of shifts is chosen randomly, the probability that some fixed column is red is at least $2^{-t}$.
As a consequence, for $s \geq (k/t) \; 2^t$ the expected number of red columns is at least $k/t$.
In particular, for some sequence of shifts, there exists a set of $k/t$ red columns. 
Hence the edge built for these shifts and columns is monochromatic.

\subsection{Counting}
We have 
\[
    {2t-1 \choose t} < 2^{2t}
\]
choices for the subset of rows in the preliminary step.
Then, in the main construction, every sequence of $t$ elements of $[s]$ and a subset of $k/t$ elements of $[s]$ is used to build an edge.
The number of choices is
\[
    s^t \cdot {s \choose k/t } \leq s^t \cdot \left( \frac{\eul s }{ k/t } \right)^{k/t}.
\]
For $s = (k/t) \; 2^t$ (we assume for simplicity that it is an integer) we obtain
\[
    (k/t)^t\; 2^{t^2} \cdot  \eul^{k/t} \; 2^k = 2^{t \log(k/t) + t^2 + k/t \log(\eul) + k }.
\]
The total number of edges is smaller than
\[
        2^{2t+ t \log(k/t) + t^2 + k/t \log(\eul) + k }.
\]
Finally we choose $t$ so that the above exponent is minimized. 
That happens for $t = \Theta( k^{1/3} )$.
In the end we obtain that the total number of edges is $2^{k+ \Theta(k^{2/3})}$.

\newcommand{\cc}{\mathcal{C}}
\section{Improved construction}
We modify only the main construction.
Recall that we work with matrix $M$ with $t$ rows.
For a fixed consistently dominating coloring of $M$, sequence of shifts $\sg \in [s]^t$ is called \emph{good} if $M(\sg)$ contains at least $s\;2^{-t}$ red columns.
The set of good sequences for a coloring $\cc$ of $M$ is denoted by $\mathcal{G}(\cc)$.

If we fix a consistently dominating coloring of $M$ and choose the sequence of shifts $\sg \in [s]^t$ uniformly at random,
the expected number of red columns in $M(\sg)$ is $s\;2^{-t}$.
That observation was used to justify that there exists a good sequence.
However, it also suggests that a large number of shift sequences might be good.
For the constructed hypergraph not to be two colorable, it is sufficient that for every consistently dominated coloring of $M$,
at least one such sequence is used in the main construction.

We apply derandomization techniques to construct relatively small set of sequences of shifts that can be used in the main construction instead of $[s]^t$.
For a family of sets $\mathcal{F}$, a set that intersects every element of that family is called a \emph{hitting set} for $\mathcal{F}$.
In these terms we are looking for a small hitting set for family $\mathcal{G}_M = \{\mathcal{G}(\cc) : \text{$\cc$ is a consistently dominating coloring of $M$} \}$.


\subsection{Sequential choice of shifts}
\label{sec:SeqChoice}
We start with estimating the size of the set of good shift sequences.
While it is not directly used in our construction, it provides good opportunity to introduce some tools.
It will also allow to derive a probabilistic argument that small hitting sets actually exist.

The property of being good is generalized to prefixes in the straightforward way --
sequence of shifts $(\sg_1, \ldots, \sg_i)$ is \emph{good} if the matrix trimmed to the first $i$ rows and shifted according to the sequence,
has at least $s \; 2^{-i}$ red columns.

Suppose that $(\sg_1, \ldots, \sg_i)$ is good. 
We want to estimate the number of possible choices of $\sg_{i+1}$ for which $(\sg_1, \ldots,\sg_i, \sg_{i+1})$ is good as well. 
If the coloring of the $(i+1)$-th row was "random", then about half of the choices would be right, and almost all of the choices would be almost right.
That property does not hold in the worst case scenario and hence we are going to work with relaxed definitions.

For $\epsi>0$, a sequence of shifts $(\sg_1, \ldots, \sg_i)$ is \emph{$\epsi$-good} if the number of red columns in the shifted matrix trimmed to the first $i$ rows is at least $ s\; \left(\frac{1-\epsi}{2} \right)^{i} $.
Then, every $\epsi$-good sequence of shifts of length $t$ gives a shifted matrix with at least
\[
    s \frac{(1-\epsi)^{t-1}}{2^t}  
\]
red columns.
For $s \geq \eul \; (k/t) \; 2^{t}$ and $\epsi= 1/t$, the number of red columns is at least $k/t$ as needed.
In the modified construction we set $s$ to $\lceil \eul \; (k/t) \; 2^{t} \rceil$.

We also define $\mathcal{G}^\epsi(\cc)$ as the set of $\epsi$-good sequences for a coloring $\cc$ of $M$
and $\mathcal{G}^\epsi_M$ as $ \{\mathcal{G}^\epsi(\cc) : \text{$\cc$ is a consistently dominating coloring of $M$} \}$.

The following proposition is used to derive a lower bound for the number of $\epsi$-good sequences.
It is formulated in more general terms that needed here, but we are going to use it again later. 
For a set $A\subset[s]$ and a number $x$, set $A+x$ is defined as the set $A$ shifted cyclically within $[s]$ by $x$, 
    formally $A+x=\{ (a-1+x) (\mod s) +1 : a\in A \}$.
Purely technical proof of the proposition is moved to Appendix \ref{app:nextShiftVolume}.

\begin{proposition}
\label{prop:nextShiftVolume}
    For any positive $\epsi <1$ and sets $A, B\subset [s]$, let $\alpha = |B| / s$,
    there exist at least 
    \[
        \frac{\epsi}{1- (1-\epsi)\alpha} \;\alpha s
    \]
    elements $x \in [s]$ for which $|(A+x) \cap B | \geq (1-\epsi) \alpha |A|$.
\end{proposition}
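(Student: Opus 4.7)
The plan is a standard reverse-Markov (first moment) argument applied to the random variable $X(x) = |(A+x) \cap B|$ where $x$ is drawn uniformly at random from $[s]$.

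First I would compute the expectation. Because the cyclic shift is a bijection on $[s]$, for each pair $(a,b) \in A \times B$ the probability that $a+x \equiv b \pmod s$ equals $1/s$. Summing over all such pairs gives $\mathbb{E}[X] = |A| \cdot |B| / s = \alpha |A|$.

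Next I would use the obvious deterministic upper bound $X(x) \leq |A|$ together with the definition of ``bad'' shifts (those with $X(x) < (1-\epsilon)\alpha |A|$). Let $p$ denote the fraction of $x \in [s]$ with $X(x) \geq (1-\epsilon)\alpha |A|$. Splitting the expectation according to whether $x$ is good or bad and using the two bounds yields
\[
\alpha |A| = \mathbb{E}[X] \leq p \cdot |A| + (1-p)(1-\epsilon)\alpha |A|.
\]
Dividing through by $|A|$ and rearranging gives
\[
p \geq \frac{\epsilon \alpha}{1-(1-\epsilon)\alpha},
\]
and the number of good shifts is at least $p s$, which is exactly the claimed bound.

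I do not expect any genuine obstacle here; the computation of $\mathbb{E}[X]$ uses only the bijectivity of cyclic shifts, and the rest is algebraic rearrangement. The only thing requiring mild care is keeping the inequality tight, i.e.\ not weakening the trivial $X \leq |A|$ bound, since using a looser bound there would inflate the denominator and spoil the stated estimate.
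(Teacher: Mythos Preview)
Your argument is correct and is essentially identical to the paper's own proof: it introduces the same random variable $X=|(A+x)\cap B|$, uses $\mathbb{E}[X]=\alpha|A|$ and the trivial bound $X\le|A|$, and extracts the same inequality for the fraction of good shifts (the paper phrases this last step as ``the minimizing distribution is supported on $(1-\epsilon)\alpha|A|$ and $|A|$'', which is exactly your split of the expectation).
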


For $|B|\geq s/2$ we get that there exist at least 
    \[
        \frac{2 \epsi}{1+ \epsi} \; s/2
    \]
elements $x \in [s]$ for which $|(A+x) \cap B | \geq (1-\epsi) |A|/2$.

Applying the proposition iteratively, we obtain that the number of $\epsi$-good sequences of length $j$ is at least
\[
    \left( \frac{\epsi}{1+ \epsi} \; s \right)^{j}.
\]
(For a fixed $j$, and some $\epsi$-good sequence $\sg$ of length $j-1$,  
    let $A$ be the set of indices of the red columns in the matrix trimmed to the first $j-1$ rows and shifted according to $\sg$,
    and $B$ be the set of indices of red entries of the $j$-th row.)

For $j=t$ we get a lower bound for the number of $\epsi$-good sequences.
Once we have that bound, 
typical application of the probabilistic method (along the lines of the proof from \cite{Erd1964}) 
allows to proof that there exists a hitting set for $\mathcal{G}^\epsi_M$ of size $2^{O(t \log(t))}$ 
(see Appendix \ref{app:hs_prob}).
We are interested however in deterministic construction.


\subsection{Expanders for hitting sets}
Linial, Luby, Saks and Zuckerman \cite{llsz} worked on deterministic constructions of small hitting sets for combinatorial rectangles.
We summarize in this section, their results that are relevant for our developments.
We follow closely their definitions.

Graph $G=(V,E)$ is an \emph{$(m,\Delta, \alpha)$-expander} if it has $m$ vertices, maximum degree $\Delta$ 
and for any $A\subset V$,
the fraction of vertices in $V - A$ that have a neighbor in $A$ is at least $\alpha |A|/m$.
For a fixed graph $G$ let $W_r$ denote the set of walks in $G$ of length $r$.
Let $W_{r,d}$ be the set of subsequences of elements of $W_r$ of length $d$ 
(not necessarily subsequences of consecutive elements).
Set $R\subset [m]^d$ is a \emph{combinatorial rectangle} if it is of a form $R_1\times \ldots \times R_d$ for some $R_1, \ldots, R_d\subset [m]$.
The volume of rectangle $R$, denoted as $\vol(R)$, is defined as $|R|/m^d$.
\begin{lemma}[\cite{llsz}]
\label{lem:llsz}
    Let $m,d$ be positive integers and $R$ be a rectangle in $[m]^d$.
    Suppose $G$ is an $(m, \Delta , \alpha)$-expander with $1/2 > \alpha > 0$. 
    If $r = 1 + (4/ \alpha) (d + \log(1/\vol(R)))$, then $W_{r,d}$ contains a point from $R$.
\end{lemma}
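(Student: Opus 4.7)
The approach, matching the one used by Linial, Luby, Saks and Zuckerman in \cite{llsz}, combines a bound on walks that avoid a fixed subset of $V$ with a greedy matching argument that reduces the $d$-dimensional rectangle problem to $d$ one-dimensional avoidance bounds.

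I would first prove an auxiliary lemma of the following shape: for any set $A\subset V$ of density $\gamma = |A|/m$ and any starting vertex, the number of walks of length $\ell$ in $G$ whose vertices all lie in $V\setminus A$ is at most
\[
\Delta^{\ell}\cdot(1-c\,\alpha\gamma)^{\ell}
\]
for some absolute constant $c>0$. The intuition is that expansion forces an $\alpha\gamma$ fraction of $V\setminus A$ to have a neighbor in $A$; a walk confined to $V\setminus A$ must dodge these ``leaky'' vertices, and this contributes a multiplicative discount at each step. The natural derivation is by induction on $\ell$: bookkeep the number of length-$\ell$ walks from the starting vertex that end at each $v\in V\setminus A$, and apply the expansion inequality at each stage to the support of this distribution.

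Writing $R=R_1\times\cdots\times R_d$ with densities $\alpha_j=|R_j|/m$ so that $\vol(R)=\prod_j\alpha_j$, I would then carry out a greedy scan of a walk $w_1,\ldots,w_r$: maintain a counter $j$ starting at $0$ that advances to $j+1$ whenever $w_i\in R_{j+1}$. The walk produces a length-$d$ subsequence in $R$ iff this counter reaches $d$. Otherwise, for some $j'<d$ and some $t\in\{0,\ldots,r-1\}$, the suffix $w_{t+1},\ldots,w_r$ avoids $R_{j'+1}$ entirely. Summing the auxiliary bound over these $dr$ failure modes, the number of \emph{bad} walks comes out strictly below the total $m\,\Delta^{r-1}$ precisely when $r\ge 1+(4/\alpha)\bigl(d+\log(1/\vol(R))\bigr)$: the $\log(1/\vol(R))$ term absorbs the per-suffix decay of $(1-c\alpha\alpha_j)^{r-t}$ below $\alpha_j$, and the extra $d$ in the length handles the union bound over $j'$ and $t$.

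The main obstacle is the auxiliary lemma: extracting clean exponential decay for walks avoiding a set purely from the vertex-expansion definition, which is strictly weaker than a spectral gap and therefore precludes a direct appeal to mixing-time arguments. The decay has to be obtained combinatorially from the boundary structure of $V\setminus A$, which is delicate because a walk is free to revisit boundary vertices many times. Once this inequality is in hand, the greedy matching argument is routine, and the constant $4/\alpha$ in the length $r$ emerges from matching the per-step decay rate against the combined $d$-fold and $r$-fold union bounds.
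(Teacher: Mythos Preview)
The paper does not give a proof of this lemma at all: it is quoted verbatim from \cite{llsz} and used as a black box, so there is no ``paper's own proof'' to compare your attempt against. What you have sketched is essentially the argument of the original Linial--Luby--Saks--Zuckerman paper (avoidance decay for walks in an expander, combined with a greedy left-to-right scan of the walk that advances a coordinate counter), so as a reconstruction of the cited result your outline is on the right track.

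One caution on the auxiliary step: the clean per-vertex bound $\Delta^{\ell}(1-c\alpha\gamma)^{\ell}$ for walks avoiding $A$ does not follow from the vertex-expansion hypothesis by the na\"ive induction you describe, because the expansion condition only says that an $\alpha\gamma$ fraction of \emph{vertices} of $V\setminus A$ touch $A$, not that an $\alpha\gamma$ fraction of the \emph{current walk endpoints} (weighted by multiplicity) do. In \cite{llsz} this is handled by grouping the walk into blocks of length two and observing that a walk confined to $V\setminus A$ must, at every other step, avoid the set of vertices of $V\setminus A$ that neighbour $A$; the degree bound then converts the vertex-fraction guarantee into an edge-fraction loss. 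You should expect the decay to be $(1-\alpha\gamma/\Delta)$ or similar per block rather than the unqualified $(1-c\alpha\gamma)$ per step, and this is where the factor $4/\alpha$ ultimately comes from. The greedy counting over failure modes is exactly as you describe.
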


The above lemma implies that a specific set of sequences $W_{r,d}$ hits every combinatorial rectangle in $[m]^d$ of sufficiently large volume.

The following rough estimations for the size of $W_{r,t}$ will be sufficient for our needs.
We have 
\[
       | W_r | \leq m (\Delta+1)^r
\]
and 
\[
    |W_{r,d}| <  2^r |W_r| \leq m\; (2(\Delta+1))^{r}.
\]

Lemma \ref{lem:llsz} leaves some space for the choice of expander graph.
Authors of \cite{llsz} used the construction of Margulis \cite{Margulis} (see also \cite{GabberGalil}) which allows to build an expander with $\Delta =8$ and $\alpha=(2-\sqrt{3})/4$.
A minor inconvenience is that the construction requires the number of vertices to be a perfect square.
However, as observed already in \cite{llsz}, we can consider the rectangles of our interest as subsets of a larger space $[m']^d$, and apply the lemma in that space.
For every $m$ we can choose number $m'$ that is a perfect square and satisfies $m \leq  m' \leq 2 m$.
While that change affects the volumes of rectangles, they get smaller at most by a factor of $2^{-d}$. 
For our purposes this cost is negligible.

When we are interested in rectangles of volume at least $\mathcal{V}$, Lemma \ref{lem:llsz} instructs to take
\[
    r = r(d, \mathcal{V})  = 1 + (4/ \alpha) (d + \log(2^{d}/\mathcal{V})).
\]
For some specific constant $\hat{C}$ and for all positive $d$ and $\mathcal{V}$ we have 
\[
    r(d, \mathcal{V})  \leq \hat{C} (d+\log(1/\mathcal{V})).
\]

\begin{corollary}
\label{cor:hittingSet}
    There exists constant $C>0$ such that, 
    for every integers $m,d$, and $\mathcal{V}>0$ there exists a subset of $[m]^d$ of size at most 
    \[
        m \cdot 2^{C(d+ \log(1/\mathcal{V}))},
    \]
    that intersects every combinatorial rectangle in $[m]^d$ of volume at least $\mathcal{V}$. 
\end{corollary}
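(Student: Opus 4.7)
The plan is to combine Lemma \ref{lem:llsz} with the Margulis--Gabber--Galil expander and let the hitting set be the set $W_{r,d}$ of length-$d$ subsequences of walks, for a suitably chosen length $r$. Almost all of the groundwork has been laid in the paragraphs preceding the statement; what remains is to set parameters correctly and to track constants.

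First I would handle the perfect-square restriction of the expander construction: pick $m'$ to be the smallest perfect square with $m' \geq m$, so that $m \leq m' \leq 2m$, and build the $(m',8,\alpha)$-expander of Margulis--Gabber--Galil with $\alpha = (2-\sqrt{3})/4$. Embedding $[m]^d \hookrightarrow [m']^d$, any combinatorial rectangle $R \subset [m]^d$ of volume at least $\mathcal{V}$ in $[m]^d$ is also a combinatorial rectangle in $[m']^d$, now of volume at least $\mathcal{V}\,(m/m')^d \geq \mathcal{V}/2^d$. I would then apply Lemma \ref{lem:llsz} inside $[m']^d$ with threshold $\vol(R) \geq \mathcal{V}/2^d$, so the prescribed walk length satisfies
\[
    r \;\leq\; \hat{C}\bigl(d + \log(2^d/\mathcal{V})\bigr) \;=\; \hat{C}\bigl(2d + \log(1/\mathcal{V})\bigr).
\]
The lemma yields $W_{r,d} \cap R \neq \emptyset$; since $R \subset [m]^d$, the set $H := W_{r,d} \cap [m]^d$ is a subset of $[m]^d$ that hits every such $R$.

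Finally, the size estimate $|W_{r,d}| \leq m'(2(\Delta+1))^r \leq 2m \cdot 18^r$ quoted just before the statement, combined with the bound on $r$, gives $|H| \leq m \cdot 2^{C(d+\log(1/\mathcal{V}))}$ once the factor $2$, the base change $\log_2 18$, and the constant $\hat{C}$ are absorbed into a single constant $C$. There is no real obstacle here: the substantive content is Lemma \ref{lem:llsz} itself, and the corollary is just a packaging statement that turns the Margulis expander and the walk-count bound into a clean hitting-set size. The only point deserving attention is the factor $2^d$ lost by padding to a perfect square, which is precisely what forces the term $d$ (and not merely $\log(1/\mathcal{V})$) to appear inside the exponent of the bound.
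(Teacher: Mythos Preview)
Your proof is correct and follows exactly the argument the paper lays out in the paragraphs preceding the corollary: pad to a perfect square $m'\le 2m$, use the Margulis--Gabber--Galil $(m',8,\alpha)$-expander, apply Lemma~\ref{lem:llsz} with the adjusted volume $\mathcal{V}/2^d$, and bound $|W_{r,d}|$ by $m'(2(\Delta+1))^r$. One small inaccuracy in your closing remark: the $d$ term in the exponent is already forced by Lemma~\ref{lem:llsz} itself, since $r = 1 + (4/\alpha)(d + \log(1/\vol(R)))$ contains $d$ regardless of padding; the perfect-square adjustment merely doubles its coefficient rather than creating it.
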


We apply that result, to construct a small hitting set for $\mathcal{G}^\epsi_M$.
That set is then used in the modified main construction instead of the set of all shift sequences.

\subsection{Under false assumption}
Unfortunately, for a fixed consistently dominating coloring of $M$, the set of good or $\epsi$-good shift sequences does not need to form a combinatorial rectangle.
It is instructive to pretend for a moment that it does.
We assume (falsely) in this subsection that $\mathcal{G}^\epsi_M$ contains only combinatorial rectangles.

By the discussion that follows Proposition \ref{prop:nextShiftVolume}, for every consistently dominating coloring of $M$, the set of $\epsi$-good shift sequences has volume at least 
\[
  \nu =  \left( \frac{\epsi}{2(1+ \epsi)} \right)^t.
\]
By Corollary \ref{cor:hittingSet} there exists a hitting set $\hs$ for all rectangles of volume $\nu$ of size $s\cdot 2^{C(t+ \log(1/\nu))}$.
For $\epsi = 1/t$ and $s= \lceil \eul\; (k/t) \; 2^{t} \rceil$, the size of $\hs$ is at most $2^{2C t \log(t)}$ (assuming that $t$ is sufficiently large).
Note that in the original construction all possible shift sequences were used. 
Using set $\hs$ instead of $[s]^t$ 
and choosing $t= (k \log(k))^{1/2}$,
the total number of edges becomes
\[
    2^{k+ O((k \log(k))^{1/2})}.
\]

%

\newcommand{\dd}{h}
\subsection{Decomposing good shift sequences}
We showed in Section \ref{sec:SeqChoice} that, for every consistently dominating coloring of $M$, the set of $\epsi$-good shift sequences is large.
While, in general, it does not have a structure of combinatorial rectangle,
in some sense it can be decomposed into a small number of such.
We start by altering the way that the sequences of shifts are represented.
For the clarity of the exposition we assume that $t$ is a power of 2.

Let $T$ be a rooted plane complete binary tree with $t$ leaves\footnote{
    i.e. all the internal nodes of $T$ have two children (left and right) and all the leaves are of the same distance from the root
}.
A subtree rooted at some internal node of $T$ consists of that node and all its descendants.
A node of $T$ is \emph{at level $j$} if its distance to the set of leaves is $j$.
Let $S_j$ be the set of inner nodes at level $j$.
Note that $|S_j|= t \;2^{-j}$, we denote that value by $d_j$.
For $\dd= \log(t)$, the tree has $\dd+1$ levels with all the leaves on level 0.

We associate leaves of $T$ with rows of $M$ in such a way that the $i$-th leaf from the left, corresponds to the $i$-th row.
Inner nodes of the tree are going to be labeled by elements of $[s]$.
These labels represent the relative shifts between neighboring rows of $M$.
For an inner node $v$, if $l$ is the rightmost leaf of the left subtree of $v$ and $r$ is the leftmost leaf of the right subtree of $v$, 
then the label of $v$ describes how row $r$ is shifted wrt $l$. 

Labeling of a subtree rooted at node $v$ is \emph{$\epsi$}-good, if 
    for $r$ being the number of descendant leaves of $v$,
    the submatrix of the rows that correspond to these leaves, 
        shifted according to the labels of the inner nodes of the subtree, 
    has at least $s \; ((1-\epsi)/2)^{r}$ red columns.
%
Note that $\epsi$-good labellings of the whole tree correspond to $\epsi$-good sequences
(up to a cyclic shift of the whole matrix, which is clearly redundant in the original construction).


We order the nodes of $S_j$ from left to right and represent labellings of the nodes of $S_j$ as elements of $[s]^{d_j}$.
We are going to work bottom up and label inner nodes in groups consisting of the nodes of the same level.
A labeling of $T$ is \emph{$\epsi$-good up to level $j$} if all the subtrees rooted at level at most $j$ are $\epsi$-good.
In all the places where we use this definition, it can be assumed that the labeling is undefined for the nodes of higher levels.
Suppose that $\tau$ is a labeling of $T$ that is $\epsi$-good up to level $j-1$.
Then, a sequence of labels $\sg\in [s]^{d_j}$ is called \emph{an $\epsi$-good level $j$ extension (of $\tau$)}
if the labeling $\tau$ in which the labels of the nodes of level $j$ has been set to $\sg$ is $\epsi$-good up to level $j$.

\begin{proposition}
\label{prop:level_good_labeling}
    Suppose, that a labelling of $T$ is $\epsi$-good up to level $j-1$.
    Then, the set of its $\epsi$-good level $j$ extensions
    forms a combinatorial rectangle of volume at least
\[
    \nu_j = \left( \epsi \; ((1-\epsi)/2)^{-2^{j-1}} \right)^{d_j}.  
\]
\end{proposition}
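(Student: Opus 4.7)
The plan is to split the claim into two independent pieces: the product (combinatorial rectangle) structure of the set of extensions, and the per-factor volume bound coming from Proposition \ref{prop:nextShiftVolume}.

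For the product structure, I would note that whether the subtree rooted at a node $v \in S_j$ is $\epsi$-good depends only on the label $\sg_v$ at $v$ and on the already-fixed labeling of the proper subtree below $v$. The subtrees hanging from distinct nodes of $S_j$ involve disjoint sets of rows and disjoint sets of inner nodes, so the conditions for different $v \in S_j$ decouple completely. Hence the set of $\epsi$-good level $j$ extensions factors as $\prod_{v \in S_j} G_v$, where $G_v \subseteq [s]$ collects the admissible labels at $v$; this is by definition a combinatorial rectangle in $[s]^{d_j}$.

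For the per-factor volume bound, fix $v \in S_j$ and let $A_v, B_v \subseteq [s]$ denote the red-column index sets of the already-shifted submatrices associated with the left and right child subtrees of $v$. By the $\epsi$-goodness of those two child subtrees one has $|A_v|, |B_v| \geq \gamma s$ with $\gamma := ((1-\epsi)/2)^{2^{j-1}}$. The label $\sg_v$ shifts one of these two submatrices relative to the other, so the red columns of the combined $v$-subtree are exactly $A_v \cap (B_v + \sg_v)$, and $\sg_v \in G_v$ iff this intersection has size at least $((1-\epsi)/2)^{2^{j}} s = \gamma^2 s$. Proposition \ref{prop:nextShiftVolume}, applied to $A_v$ and $B_v$ with the appropriate $\epsi$, then delivers a lower bound on $|G_v|$ (equivalently, on $|G_v|/s$).

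Multiplying this per-node lower bound over the $d_j$ nodes of level $j$ yields the claimed lower bound on the volume of the rectangle. The only arithmetic to track is to check that the count $\frac{\epsi}{1-(1-\epsi)\alpha}\,\alpha s$ from Proposition \ref{prop:nextShiftVolume} reduces, under $\alpha \geq \gamma$, to the announced per-factor figure, and that the guaranteed intersection size $(1-\epsi)\alpha|A_v|$ reaches the target density $\gamma^2$ needed to certify $\epsi$-goodness one level up; both follow from monotonicity together with the assumed density lower bounds. The conceptual core — and the one subtlety worth emphasising — is the first step: the binary-tree decomposition makes the sibling pairs at level $j$ act on disjoint strips of the matrix, which is precisely what forces the constraints to factorise and produces a rectangle rather than an arbitrary subset of $[s]^{d_j}$.
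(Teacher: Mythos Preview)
Your proposal is correct and follows essentially the same route as the paper: first observe that the subtrees rooted at the nodes of $S_j$ sit over disjoint row sets so the constraints decouple into a combinatorial rectangle, then apply Proposition~\ref{prop:nextShiftVolume} at each node $v$ with $A=A_v$, $B=B_v$ to lower-bound each factor by $\epsi\,((1-\epsi)/2)^{2^{j-1}}$, and finally take the product over the $d_j$ nodes. Your write-up is in fact slightly more careful than the paper's in flagging the two monotonicity checks (for the shift count and for the intersection-size target), and your use of the positive exponent $((1-\epsi)/2)^{2^{j-1}}$ is the intended one.
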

\begin{proof}
Fix $j$ and suppose that labeling $\tau$ is $\epsi$-good up to level $j-1$.
We want to assign labels to the nodes of $S_j$ in such a way that all the subtrees rooted at depth $j$ are $\epsi$-good shift trees as well.
Note that for any pair of distinct nodes of level $j$, the property of the corresponding subtrees of being $\epsi$-good shift trees 
are determined by disjoint sets of rows of the underlying matrix.
That justify that the set of $\epsi$-good level $j$ extensions forms a combinatorial rectangle.

Let $v$ be a node of $S_j$ and let $A$ and $B$ be the sets of indices of red columns respectively 
    in the shifted submatrices corresponding to the left and right subtrees of $v$.
By the assumptions we know that both these sets have cardinality at least 
\[
    s \; ((1-\epsi)/2)^{-2^{j-1}}.
\]
We need to estimate the number of $x\in [s]$ for which the set $A \cap (B +x)$ has cardinality at least
\[
    s \; ((1-\epsi)/2)^{-2^{j}}.
\]
Proposition \ref{prop:nextShiftVolume} gives that there exist at least 
\[
    \epsi \; ((1-\epsi)/2)^{-2^{j-1}} \; s
\]
such values.
We obtain that the volume of combinatorial rectangle of $\epsi$-good level $j$ extensions is at least 
\begin{align*} 
    \left( \epsi \; ((1-\epsi)/2)^{-2^{j-1}} \right)^{d_j}
\end{align*}
\end{proof}
%

By Corollary \ref{cor:hittingSet}, there exists a set $\hs_j$ of cardinality
\[
    s \cdot 2^{C (d_j + \log(1/\nu_j))},
\]
that is a hitting set for the family of $\epsi$-good level $j$ extensions for labellings that are $\epsi$-good up to level $j-1$.
That implies the following proposition.

\begin{proposition}
\label{prop:finalHS}
    Set $\hs = \hs_1 \times \ldots \times \hs_\dd$ is a hitting set for the family of sets of $\epsi$-good labellings of $T$.
\end{proposition}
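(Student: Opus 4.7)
The plan is a greedy level-by-level induction that uses Proposition \ref{prop:level_good_labeling} together with Corollary \ref{cor:hittingSet} at each step. Fix an arbitrary consistently dominating coloring $\cc$ of $M$; it suffices to exhibit an element of $\hs_1 \times \ldots \times \hs_\dd$ which forms an $\epsi$-good labeling of $T$.

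First I would take the empty labeling as the base case: it is vacuously $\epsi$-good up to level $0$. For the inductive step, assume I have already chosen $\sg_1 \in \hs_1, \ldots, \sg_{j-1} \in \hs_{j-1}$ so that the partial labeling $\ta = (\sg_1, \ldots, \sg_{j-1})$ is $\epsi$-good up to level $j-1$. By Proposition \ref{prop:level_good_labeling}, the set of $\epsi$-good level $j$ extensions of $\ta$ is a combinatorial rectangle in $[s]^{d_j}$ of volume at least $\nu_j$. By the way $\hs_j$ was constructed via Corollary \ref{cor:hittingSet}, it intersects every combinatorial rectangle in $[s]^{d_j}$ of volume at least $\nu_j$, so I can select some $\sg_j \in \hs_j$ that is an $\epsi$-good level $j$ extension of $\ta$. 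The labeling extended by $\sg_j$ is then $\epsi$-good up to level $j$.

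After $\dd$ iterations the tuple $(\sg_1, \ldots, \sg_\dd)$ lies in $\hs$ and is $\epsi$-good up to level $\dd$, i.e.\ every subtree of $T$ (including $T$ itself) is $\epsi$-good. By the correspondence between labellings of $T$ and sequences of shifts noted in the paper, this yields an $\epsi$-good shift sequence in $\hs$, so $\hs \cap \mathcal{G}^\epsi(\cc) \neq \emptyset$. Since $\cc$ was arbitrary, $\hs$ hits every member of $\mathcal{G}^\epsi_M$.

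The one point worth highlighting is that this argument relies crucially on a \emph{uniformity} property of $\hs_j$: the rectangle of $\epsi$-good level $j$ extensions depends on the coloring $\cc$ and on all labels $\sg_1, \ldots, \sg_{j-1}$ chosen so far, so no single fixed rectangle can be targeted in advance. Fortunately Corollary \ref{cor:hittingSet} gives a single set that hits every sufficiently large combinatorial rectangle, which is exactly what makes the greedy induction go through; beyond that observation there is no real obstacle, and the rest of the proof is bookkeeping.
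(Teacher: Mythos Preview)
Your argument is correct and is exactly the level-by-level greedy induction that the paper leaves implicit when it writes ``That implies the following proposition.'' One small nitpick: the base case at level $0$ is not vacuous---subtrees rooted at leaves are single rows, and the condition that each has at least $s(1-\epsi)/2$ red entries holds because the coloring is assumed consistently dominated---but this does not affect the rest of your proof.
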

It remains to estimate the size of $\hs$.
We have 

\begin{align*}
    |\hs| &\leq \prod_{j=1\ldots \dd} s \cdot 2^{C (d_j + \log(1/\nu_j))} \\
    &= s^{\log(t)} \cdot 2^{C \sum_{j=1\ldots \dd} (d_j + \log(1/\nu_j)) } \\
    &< s^{\log(t)} \cdot 2^{C t} \cdot 2^{C \sum_{j=1\ldots \dd} \log(1/\nu_j) },
\end{align*}
and
\begin{align*}
    \sum_{j=1\ldots \dd} \log(1/\nu_j) 
        &= \sum_{j=1\ldots \dd} d_j (\log(1/\epsi) +2^{j-1} \log(2/(1-\epsi)) ) \\
        &< t \log(1/\epsi) + t \sum_{j=1\ldots \dd} \log(4) \;\;\;\;\text{(for $\epsi<1/2$)}\\
        & = t \cdot \log(1/\epsi) + 2t \cdot \log(t)  
\end{align*}

Therefore, for our parametrization (i.e. $s= \lceil \eul\; (k/t)\; 2^t \rceil$ and $\epsi= 1/t$), and for all sufficiently large $t$ we get
\[
    |\hs| \leq 2^{4 t \log(t)}.
\]

\subsection{Modified main construction}
Let $\hs$ be the set from Proposition \ref{prop:finalHS}.
As we already observed labellings of $T$ correspond to shift sequences up to a cyclic shift of the whole matrix.
For a labeling $\tau$ let $\sg(\tau)$ be a shift sequence that is compatible with $\tau$.
Observe, that if $\tau$ is an $\epsi$-good labeling, then $\sg(\tau)$ is $\epsi$-good shift sequence.
Recall that we chose $s= \eul\; (k/t)\; 2^t$ so that if $\sg$ is an  $\epsi$-good sequence for some consistently dominated coloring of $M$,
then $M(\sg)$ has at least $k/t$ red columns.
The modified main construction proceeds as follows.

\texttt{
\noindent\\
For every
\begin{enumerate}
    \item labeling of the tree $\tau \in \hs$,
    \item and set of indices $I \subset [s]$ of size $k/t$,
\end{enumerate}
add to $H_M$ an edge build from all elements of the columns of $M(\sg(\tau))$ with indices in $I$.
}

By Proposition \ref{prop:finalHS} for every consistently dominated coloring of $M$, at least one $\epsi$-good labeling $\tau$ is used in the construction.
Then, for every such coloring, matrix $M$ shifted according to $\sg(\tau)$ has at least $k/t$ red columns.
As a consequence at least one of the edges of $H_M$ is monochromatic.

\subsubsection{Counting}

Just like in the original construction, we have less than $2^{2t}$
choices for the subset of rows in the preliminary step.
Then, in the modified main construction, we use every sequence of $\hs$ with every subset of $k/t$ elements of $[s]$ to build an edge.
The number of choices is smaller than
\[
    2^{4 t \log(t)} \cdot {s \choose k/t } < 2^{4 t \log(t)} \cdot \left( \frac{\eul s }{ k/t } \right)^{k/t}.
\]
Substituting the value of $s$ we obtain a value that is smaller than
\[
    2\cdot 2^{ 4 t \log(t)} \cdot  \eul^{2k/t} 2^k = 2^{1+ 4 t \log(t) + (2k/t) \log(\eul) + k }.
\]
The bound is multiplied by $2$ to compensate for the ceiling in the definition of $s$.
Taking into account preliminary choices of rows, the total number of edges is smaller than
\[
        2^{2t+ 1+  4 t \log(t) + (2k/t) \log(\eul) + k}.
\]
For $t=(k/ \log(k))^{1/2}$, the total number of edges becomes $2^{k+ \Theta((k\log(k))^{1/2})}$.



\bibliographystyle{amsplain}
\bibliography{pb}

\appendix

\section{Proof of Proposition \ref{prop:nextShiftVolume}}
\label{app:nextShiftVolume}
\begin{proof}
    Let random variable $X$ denote the size of $(A+x) \cap B$, when $x\in [s]$ is chosen uniformly at random.
    By the fact that $|B|= \alpha s$ and linearity of expectation we obtain
    \[
        \expt(X) = \alpha |A|.
    \]
    From the definition of $X$, we get also
    \[
        X \leq |A|.
    \]
    We can observe now that a distribution that minimizes $\prob[ X > (1-\epsi) \alpha |A| ]$ and satisfies the above conditions, 
        is supported only by values $(1-\epsi) \alpha |A|$ and $|A|$.
    There is only one such distribution that satisfies $\expt(X)= \alpha |A|$. 
    Straightforward calculations give
    \[
        \prob[ X > (1-\epsi) \alpha |A| ] \geq \frac{\epsi \alpha}{1- (1-\epsi)\alpha}.
    \]
\end{proof}

\section{Small hitting sets exist}
\label{app:hs_prob}
Recall that, for a fixed consistently dominated coloring of matrix $M$ with $t$ rows, the volume of $\epsi$-good sequences is at least
\[
   p=  \left( \frac{\epsi}{1+ \epsi} \right)^{t}.
\]
The volume is exactly the probability that uniformly random sequence is $\epsi$-good.
Let $S$ be a set built from $m$ uniformly and independently sampled random sequences from $[s]^t$.
(Since, the sequences are sampled with repetitions, it may happen that $|S|<m$.)
The following formula upperbounds the expected number of consistently dominated colorings of $M$, for which the set of $\epsi$-good sequences is not hit by $S$
\begin{align*}
    2^{st}  \cdot (1-p)^m 
        &<  \exp(st \ln(2) -  m p).
\end{align*}
Therefore, whenever $s t \ln(2) -  m p \leq 0$, some set of $m$ sequences hits all the sets of $\epsi$-good sequences for consistently dominating colorings.
For $s= \lceil \eul\; (k/t)\; 2^t \rceil$ and $\epsi = 1/t$ it is sufficient to take $m$ of the order $2^{O(t \log(t))}$ to satisfy the inequality.
As a consequence there exists a hitting set for $\mathcal{G}^\epsi_M$ of size $2^{O(t \log(t))}$.

\end{document}